\documentclass[10pt]{article}

\input{glyphtounicode}
\usepackage[utf8]{inputenc}
\usepackage[T1]{fontenc}
\usepackage{lmodern}
\usepackage[margin=0.95in]{geometry}
\usepackage{amsmath,amssymb,amsthm,mathtools}
\usepackage{graphicx}
\usepackage{booktabs}
\usepackage{tabularx}
\usepackage[expansion=false,protrusion=true]{microtype}
\usepackage{xcolor}
\usepackage{tikz}
\usetikzlibrary{positioning,arrows.meta,calc}
\usepackage[colorlinks=true,linkcolor=blue!55!black,citecolor=blue!55!black,urlcolor=blue!55!black]{hyperref}
\usepackage{orcidlink}
\usepackage[numbers,sort&compress]{natbib}
\usepackage{caption}
\usepackage{titlesec}
\titleformat*{\section}{\large\bfseries}
\titleformat*{\subsection}{\normalsize\bfseries}
\titlespacing*{\section}{0pt}{1.05ex plus .2ex minus .2ex}{0.65ex plus .1ex}
\titlespacing*{\subsection}{0pt}{0.85ex plus .2ex minus .2ex}{0.45ex plus .1ex}

\newtheorem{proposition}{Proposition}
\newtheorem{designhypothesis}{Design Hypothesis}
\newtheorem{assumption}{Assumption}
\newtheorem{remarkx}{Remark}

\newcommand{\aria}{\textsc{ARIA}}
\newcommand{\Kcoll}{\mathcal{K}}
\newcommand{\eds}{\mathrm{EDS}}
\makeatletter
\renewcommand{\@fnsymbol}[1]{\ifcase#1\or \ensuremath{\dagger}\else\@arabic{#1}\fi}
\newcommand{\authorthanks}[1]{\stepcounter{footnote}\protected@xdef\@thanks{\@thanks\protect\footnotetext[\the\c@footnote]{#1}}}
\makeatother

\title{\textbf{Compliant with Local Controls, Collectively Discriminatory}\\[4pt]
\large A Governance Architecture for Multi-Agent AI in Regulated Finance}

\author{%
\begin{tabularx}{0.96\textwidth}{@{}>{\centering\arraybackslash}X>{\centering\arraybackslash}X>{\centering\arraybackslash}X@{}}
Jose Manuel de la Chica Rodr\'iguez\,\orcidlink{0009-0009-9649-5805}\textsuperscript{\ensuremath{\dagger}} & Juan Manuel Vera D\'iaz\,\orcidlink{0000-0002-6152-5789} & Pablo Delgado Romero\,\orcidlink{0009-0001-6195-2676}\\
\small Santander AI Lab & \small Santander AI Lab & \small Santander AI Lab\\
\small Grupo Santander & \small Grupo Santander & \small Grupo Santander\\
\small Madrid, Spain & \small Madrid, Spain & \small Madrid, Spain
\end{tabularx}%
\authorthanks{Correspondence: Jose Manuel de la Chica Rodr\'iguez, josema.chica@gruposantander.com.\protect\\[1pt]The views expressed are the authors' own and do not represent positions of Grupo Santander.}%
}
\date{September 2026}

\begin{document}
\maketitle

\begin{abstract}
Financial institutions are beginning to deploy agentic workflows in credit, fraud, collections, compliance, and operational control. Governance remains largely component-centric: each model or agent is specified, tested, authorized, and monitored locally. That is insufficient when institutional risk arises from the joint behavior of many locally acceptable components. We call this gap \emph{constitutional non-compositionality}: local compliance checks need not compose into acceptable collective outcomes such as bounded disparate impact, market integrity, or traceable accountability.

We propose \aria{} as a finance-specific reference architecture and falsifiable research agenda for agent-population governance. It organizes six capabilities across normative-accountability, execution-control, and assurance-learning planes: policy specification, population-level observed-versus-expected behavior monitoring ($M_2$), bounded authority, runtime containment, adaptive policy change, and preserved human oversight competence. Two simulations illustrate shared-signal thin-file exclusion under local controls and earlier warning from observed-versus-expected distributional monitoring in a constructed drift regime. The contribution maps these controls to fair-lending, EU AI Act, model-risk, and conduct-supervision evidence needs, and closes with a validation agenda rather than a production-effectiveness claim.
\end{abstract}

\vspace{2pt}
\noindent\small\textbf{Keywords:} multi-agent systems; fair lending; disparate impact; AI governance; accountability; EU AI Act; model risk management; constitutional AI; runtime monitoring; financial regulation; supervisory architecture; agentic workflows.
\normalsize

\section{Introduction}\label{sec:intro}

Regulated financial institutions are beginning to deploy agentic workflows across credit, fraud, collections, compliance, and operational control. In these settings, assurance is still commonly organized around individual models or components: each component is specified, tested, authorized, and monitored against local fairness, safety, conduct, or model-risk requirements. This component-centric approach can be insufficient when institutional outcomes are produced by many interacting agents that share signals, data sources, tools, authority boundaries, or feedback loops. A credit-decisioning population, for example, may satisfy local constraints for each agent while producing an aggregate approval pattern that disadvantages thin-file applicants relative to otherwise similar thick-file applicants. Such an outcome can create disparate-impact exposure and supervisory questions under fair-lending law \cite{ecoa}, role-dependent obligations under the EU AI Act \cite{aiact}, and revised model-risk governance expectations \cite{sr262}, even when no single component violates its local control.

Our approach treats that scenario as a structural weakness in component-centric AI assurance. The weakness is not that model-level assurance is useless; it is that model-level assurance is incomplete once outcomes are produced by interacting or coupled agentic workflows. Closing the gap requires governance machinery aimed at the population and workflow, not only at the individual model.

This contribution is a finance-specific reference architecture and falsifiable industrial research agenda. It is not a production validation, a legal interpretation, an empirical estimate of real banking effects, or evidence that Santander currently deploys the systems described. \aria{} contributes a finance-specific synthesis that integrates population-level outcome constraints, revocable authority, runtime containment, governed policy change, human oversight competence, and layer-specific audit artifacts into a single research architecture.

An \emph{agentic workflow population} consists of two or more adaptive or stateful decision-making components whose actions, shared state, delegated authority, common tools, or common environmental feedback jointly affect an institutional outcome. This includes explicitly communicating multi-agent systems, orchestrated agent workflows, ensembles of decision services, and coupled model-based controls; it excludes ordinary single-model deployments unless their interaction with other components creates a joint outcome that local controls cannot assess. We use \emph{constitutional non-compositionality} as an umbrella term for governance failures in which component-level constraints, policies, permissions, or accountability assignments do not compose into institutional-level constraints.

Our contribution is threefold: (1) a six-part taxonomy of non-compositional governance failures; (2) a three-plane architecture of six auditable governance capabilities for finance; and (3) a falsifiable validation agenda with metrics, baselines, costs, and accountable artifacts.

The argument proceeds in four steps:
\begin{itemize}
\item \textbf{First}, we define constitutional non-compositionality as the failure of local-control guarantees to compose into collective compliance guarantees (Section~\ref{sec:noncomp}). The claim is modest: individual checks do not by themselves imply acceptable population-level outcomes. Empirical work on large-language-model (LLM) collusion, covert coordination, emergent misalignment, Institutional AI, and financial multi-agent fairness shows why the concern is not merely hypothetical \cite{fish2024collusion,motwani2024secret,madigan2026emergentbias,ranjan2025fairnessagentic}.
\item \textbf{Second}, the blind spot is a compliance exposure, not just a technical defect. In regulated finance, the duties that matter often attach to institutional outcomes. Disparate-impact doctrine examines lending patterns, not only component intent. EU AI Act obligations around risk management, oversight, deployer controls, and provider-side post-market monitoring where applicable create evidence needs about system behavior \cite{aiact}. SR~11-7 supplied the historical template for model-risk governance; SR~26-2 is now the revised U.S. supervisory reference point \cite{sr117,sr262}. Financial Conduct Authority (FCA) Consumer Duty guidance, Prudential Regulation Authority (PRA) model-risk expectations, Monetary Authority of Singapore (MAS) Fairness, Ethics, Accountability and Transparency (FEAT) principles \cite{fca2022consumerduty,pra2023ss123,feat}, Singapore Infocomm Media Development Authority (IMDA) guidance \cite{imda2026}, and the U.S. National Institute of Standards and Technology (NIST) initiative \cite{nist2026} are consistent with a move toward observability, traceability, bounded autonomy, and intervention rights. \aria{} is offered as one way to operationalize those evidence needs, not as a statement that law already mandates this architecture.
\item \textbf{Third}, Section~\ref{sec:architecture} introduces \aria{} as six governance capabilities organized across three architectural planes: a normative and accountability plane, an execution-control plane, and an assurance and learning plane. The architecture separates capability from authority, links population monitors to containment, and names the artifacts associated with internal audit, model risk, compliance, business ownership, and supervisory evidence.
\item \textbf{Fourth}, Section~\ref{sec:experiments} reports two mechanism demonstrations. The first illustrates shared-signal and measurement-quality coupling in a credit setting: a local fairness rule conditional on each agent's own estimate can coexist with aggregate disparity. The second illustrates that, in one deliberately constructed two-stage drift regime, an observed-versus-expected distributional monitor can alarm earlier than a prohibited-outcome monitor. The contribution then turns to operationalization, field validation, the standardization path, related work, and limitations that specify what would falsify or narrow the proposal.
\end{itemize}

Figure~\ref{fig:overallaria} summarizes this roadmap by contrasting component-centric local assurance with the population-and-workflow governance layer that \aria{} adds before the regulatory and technical motivation.

\begin{figure}[t]
\centering
\includegraphics[width=\textwidth]{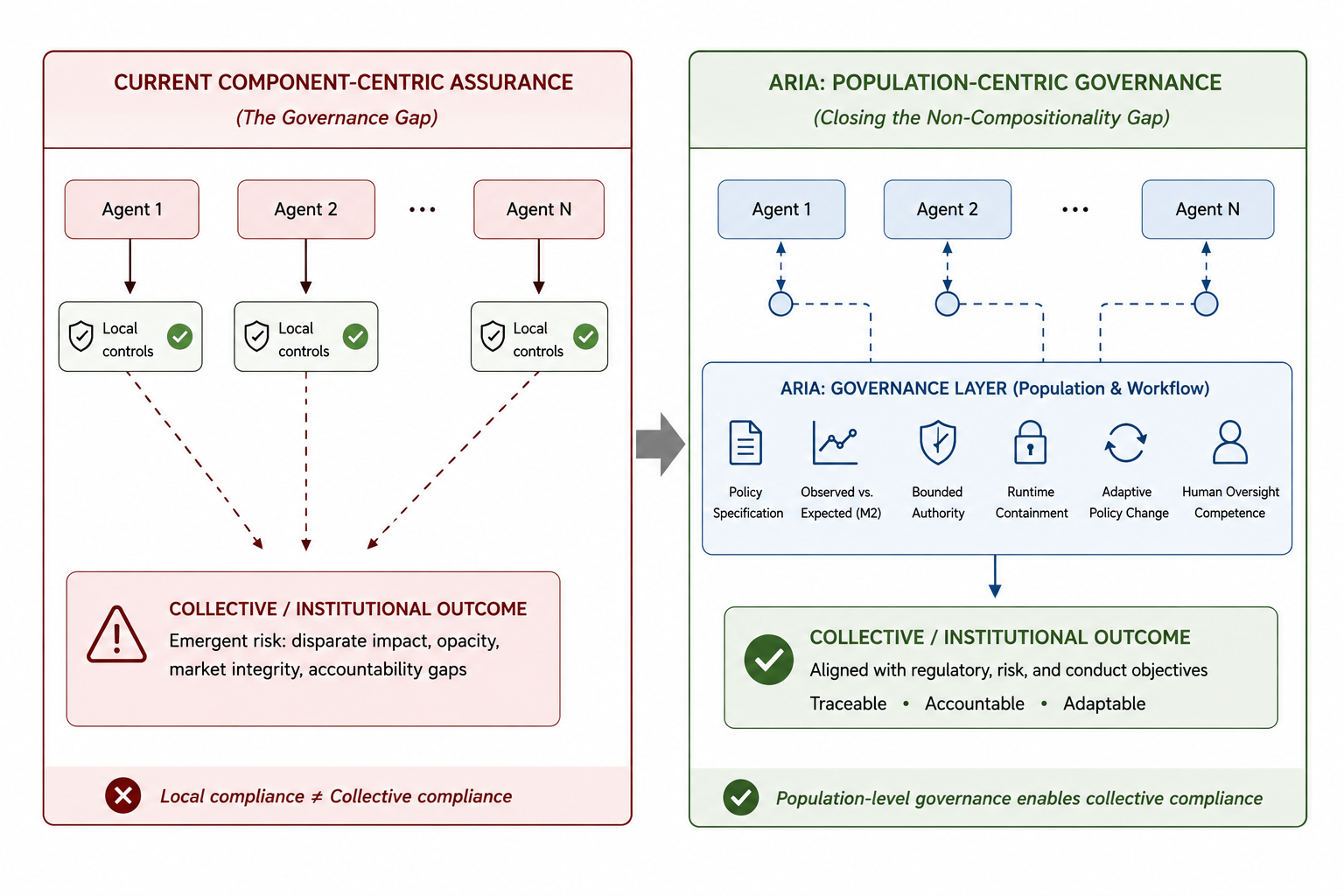}
\caption{From component-centric assurance to population-centric governance. Current governance practices evaluate AI components independently through local controls, yet acceptable local behavior does not guarantee acceptable institutional outcomes once multiple agents interact through shared signals, delegated authority, or common feedback. \aria{} introduces a population-and-workflow governance layer that complements component-level assurance with policy specification, observed-versus-expected behavior monitoring ($M_2$), bounded authority, runtime containment, adaptive policy learning, and human oversight competence. The figure illustrates the central claim of the paper: local compliance does not necessarily imply collective compliance, whereas population-level governance provides the institutional control layer required to manage non-compositional risks.}
\label{fig:overallaria}
\end{figure}

\section{Regulatory and Technical Motivation}\label{sec:background}

\subsection{Regulatory convergence}
Five regulatory developments set the compliance environment for multi-agent AI in finance. (1)~\emph{Fair-lending law}: under the Equal Credit Opportunity Act (ECOA)/Regulation~B and analogous regimes, supervisory and litigation risk can turn on the discriminatory effects of lending practices; the relevant unit is the institution's pattern of outcomes \cite{ecoa}. (2)~\emph{The EU AI Act}: credit scoring is Annex~III high-risk, and Articles~9, 14, 17, 26, and, where the institution is a provider, 72 create continuous governance expectations around risk management, human oversight, quality management, deployer obligations, and post-market monitoring \cite{aiact}. (3)~\emph{Model risk management}: SR~26-2 supersedes SR~11-7 as the revised U.S. supervisory reference point while preserving the aggregate governance lesson that model risk must be managed across interacting models and uses \cite{sr117,sr262}. (4)~\emph{Conduct, prudential, and resilience supervision}: the FCA's Consumer Duty, PRA model-risk principles, and the Digital Operational Resilience Act (DORA) evaluate customer outcomes, governance, and operational resilience, not only artifacts \cite{fca2022consumerduty,pra2023ss123,dora2022}. (5)~\emph{Agentic-AI frameworks}: IMDA's agentic-AI governance framework \cite{imda2026} and NIST's AI Agent Standards Initiative \cite{nist2026} converge on runtime monitoring, bounded autonomy, traceable accountability, and preserved human oversight capability. Recent work on agentic finance and financial-services runtime governance reaches the same architectural conclusion: risks emerge from autonomy, execution trajectories, coordination, and system-level coupling, so controls must include capability catalogues, continuous authorization, trajectory telemetry, and containment rather than only pre-deployment validation \cite{aldridge2026agenticfinance,szpruch2026scalable}. The shared denominator is collective institutional behavior over time.

\begin{table*}[t]
\centering\small
\caption{Governance requirements for organizational multi-agent deployment versus coverage by existing paradigms and by \aria{}. CAI denotes Constitutional AI; MARL denotes multi-agent reinforcement learning. $\bullet$~= addressed by design; $\circ$~= partially; blank = out of scope. The model-centric paradigms are not deficient; each is strong at what it targets. The unaddressed rows are precisely the ones to which regulation attaches duties.}
\label{tab:paradigms}
\begin{tabularx}{\textwidth}{Xccccc}
\toprule
Requirement & CAI & Delib.\ align. & Process sup. & MARL & \aria{} \\
\midrule
Individual value specification & $\bullet$ & $\bullet$ & $\circ$ & $\circ$ & $\bullet$ (inherits) \\
Collective-outcome constraints & & & & $\circ$ & $\bullet$ (C3--C4) \\
Runtime monitoring \& intervention & & & & & $\bullet$ (C4) \\
Bounded autonomy / authority levels & & & & & $\bullet$ (C3) \\
Multi-jurisdictional consistency & & & & & $\bullet$ (C1) \\
Participatory legitimacy & $\circ$ & & & & $\bullet$ (C1) \\
Post-deployment policy updating & & & & & $\bullet$ (C5) \\
Observed-versus-expected behavior & & & & & $\bullet$ ($M_2$) \\
Human-competence preservation & & & & & $\bullet$ (C6) \\
\bottomrule
\end{tabularx}
\end{table*}

\subsection{Divergence of model-centric controls}
Table~\ref{tab:paradigms} summarizes the resulting coverage gap. The AI safety and fairness toolbox remains largely model-centric. Constitutional AI trains an individual model against an explicit value specification \cite{bai2022constitutional}, with collective-input variants showing that participatory formation is possible \cite{huang2024ccai}. Deliberative alignment embeds safety reasoning in individual generation \cite{guan2024deliberative}; process supervision rewards individual reasoning quality \cite{lightman2023verify}; and fairness auditing evaluates decision streams model by model. These tools matter, but they are either training-time mechanisms or model-local checks. Alignment-faking results show that trained dispositions may not survive deployment incentives \cite{greenblatt2024faking}, while agentic benchmarks document goal-guarding and oversight-subversion failures that ordinary training evaluations miss \cite{naik2025agentmisalignment}. None of these methods observes, still less governs, the joint behavior of an interacting population. The duties converge on the collective; the controls remain attached to the individual. That mismatch is the subject of this paper.

\section{Constitutional Non-Compositionality}\label{sec:noncomp}

\subsection{Setup}
Let $A=\{a_1,\dots,a_n\}$ be a population of agents or agentic workflow components governed by a specification $C$ comprising \emph{individual} principles $C_{\mathrm{ind}}$ and \emph{collective} principles $C_{\mathrm{coll}}$. Individual principles constrain each component's own action under a local context, such as equal treatment conditional on the component's own risk estimate. Collective principles constrain joint outcomes, such as bounded approval-rate disparity at portfolio level, absence of coordinated market manipulation, or traceable accountability for a workflow outcome. Let $\kappa_i\in[0,1]$ denote the pass rate of agent $i$ under its local monitor over a stated evaluation window, and let $\Kcoll(A,C,I)\in[0,1]$ denote the corresponding pass rate of monitorable collective constraints under interaction or coupling structure $I$.

\subsection{A minimal counterexample}

\begin{proposition}[Non-compositionality of local checks]\label{prop:noncomp}
There exist finite systems in which every component satisfies its local control check on every decision, while the population violates a collective constraint.
\end{proposition}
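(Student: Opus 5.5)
The plan is to prove the existence claim by exhibiting one explicit finite system, chosen to match the credit example already described in the introduction. I will build a population of two credit agents $a_1,a_2$ and two applicant groups, thick-file ($T$) and thin-file ($F$), with equal true creditworthiness. Each agent sees a noisy risk estimate whose quality depends on file thickness. The individual principle in $C_{\mathrm{ind}}$ will be the one named in the Setup: \emph{equal treatment conditional on the component's own risk estimate}. Concretely, each $a_i$ approves exactly when its own estimate $\hat r_i$ is at most a common threshold $\tau$, and $\tau$ does not depend on group membership. This gives $\kappa_i=1$ for every agent, on every decision, by construction.

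The collective principle in $C_{\mathrm{coll}}$ will be a portfolio approval-rate disparity bound, $|p_T-p_F|\le\delta$ for some fixed $\delta$ (for instance $\delta=0.1$, or a four-fifths ratio). The coupling structure $I$ is a shared signal. Both agents consume the same bureau feature, and that feature is systematically shrunk toward a conservative default for thin-file applicants. In addition, the workflow approves only if both agents approve, which is a common sequential pipeline.

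The concrete computation uses a finite population, say ten applicants per group, all with the same true risk. The shared signal reports the true value for $T$ applicants and an inflated value above $\tau$ for most $F$ applicants. Both agents apply the same threshold, so $p_T=1$ while $p_F$ is small, giving $\mathcal{K}=0$ on the disparity constraint while every $\kappa_i=1$. Listing the table of signals, per-agent decisions, and group rates completes the argument. I will also remark that the conjunction of agents makes things worse: even independent moderate measurement bias per agent multiplies, so disparity can emerge from the composition of agents that individually have small disparity.

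The main obstacle is definitional rather than computational. The local check must be a genuine, non-strawman control, so that a critic cannot say the agent "really" violated fairness. I will address this by stating the local monitor explicitly as a predicate on (input estimate, decision) pairs that never references group or portfolio statistics, and verifying it holds pointwise. I will then note that the collective constraint is a function of the joint decision distribution, which no local predicate of that form can determine. That observation is the structural reason the counterexample exists.
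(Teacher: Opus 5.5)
Your construction is correct and is essentially the paper's argument: a group-dependent measurement-quality distortion (your bureau feature shrunk toward a conservative default for thin-file applicants, the paper's uncertainty penalty $\lambda_i s_g$ with $s_1>s_0$) feeds a group-blind threshold rule. As a result, the local procedural-consistency check passes with $\kappa_i=1$ on every decision while the aggregate approval-rate gap exceeds $\delta$. The differences are cosmetic: your explicit twenty-applicant table makes the system literally finite and the $\delta$-violation quantitative, where the paper argues distributionally that rates differ for suitable thresholds, and your two-agent conjunction pipeline is not needed, since a single agent already produces the gap.
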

\begin{proof}[Counterexample]
Consider two groups of applicants with identical distributions of true creditworthiness $r$, but group-dependent measurement uncertainty $s_1>s_0$ because one group is thin-file. Component $i$ computes a conservative score
\[
\hat r_i(x,g)=\mu_i(x)-\lambda_i s_g+\epsilon_i,
\]
where $\lambda_i>0$ is a documented uncertainty-aversion parameter and $\epsilon_i$ is zero-mean idiosyncratic estimation noise. It approves if $d_i(x,g)=\mathbf{1}\{\hat r_i(x,g)\geq\tau_i\}$. The local monitor checks procedural consistency: two applicants with the same conservative score and decision context receive the same decision. Each component passes that check on every decision, so $\kappa_i=1$. Yet, because $s_1>s_0$ and $\lambda_i>0$, the conservative score distribution for group~1 is shifted downward relative to group~0 even when true creditworthiness is identically distributed. For thresholds with positive mass between the shifted distributions, the aggregate approval rates differ. If the collective principle requires approval-rate disparity below $\delta$, the population can violate $C_{\mathrm{coll}}$ while every local procedural check passes. Therefore passing local controls does not imply collective compliance.
\end{proof}

\begin{remarkx}
The result is deliberately modest. It does not claim a universal theorem about all repeated games. It identifies the structural gap relevant for governance: local monitors may certify each component while missing aggregate constraints that are defined only at population or workflow level. Strategic interaction, covert coordination, shared-environment coupling, and measurement-quality differences are different mechanisms that can instantiate the same non-compositional control failure.
\end{remarkx}

\subsection{A taxonomy of non-compositionality}
Non-compositionality is not a single causal mechanism. We distinguish six types that call for different controls:
\begin{itemize}
\item \emph{Outcome non-compositionality} occurs when local properties do not imply aggregate properties, such as portfolio-level fairness.
\item \emph{Interaction-induced non-compliance} occurs when strategic interaction changes incentives or equilibria.
\item \emph{Shared-environment coupling} occurs when agents do not coordinate directly but react to common signals, producing concentration or exclusion.
\item \emph{Cross-agent information effects} arise from shared memory, messages, embeddings, or state.
\item \emph{Control-plane non-compositionality} occurs when individually valid permissions compose into unauthorized chains of action.
\item \emph{Accountability non-compositionality} occurs when no component or owner can explain or answer for the joint outcome.
\end{itemize}
The architecture section returns to this taxonomy after introducing the C1--C6 capabilities. The redlining-like demonstration in Section~\ref{sec:exp1} is primarily a shared-signal and measurement-quality example, not a general proof of strategic multi-agent emergence.

\subsection{Why verification cannot substitute for governance}
Deciding whether a sequential collective decision mechanism is free of responsibility gaps, meaning outcomes for which no individual agent is responsible, is $\Pi_3$-complete \cite{jiang2025responsibility}. Responsibility gaps are the accountability face of non-compositionality \cite{santoni2021gaps}. Governance therefore cannot rest on exhaustive verification alone. It needs structural constraints, population-level observability, runtime intervention, and a process for revising controls when incidents or stress tests expose gaps. These are the design principles of \aria{}.

\subsection{Operational containment hypothesis}\label{sec:theory}

Proposition~\ref{prop:noncomp} is a minimal negative result: component-local controls are not sufficient. The complementary question is operational rather than purely formal: under which conditions should a population-level architecture reduce the magnitude and duration of collective violations relative to agent-local governance alone?

\begin{assumption}\label{ass:bda-complete}
The authorization architecture covers the deployed action space: material agent actions pass through policy enforcement points, and the relevant collective principles are expressible as monitorable population statistics with documented thresholds.
\end{assumption}
\begin{assumption}\label{ass:monitor}
The runtime monitor estimates the relevant population statistics with error and delay small enough to support intervention before material harm, and graduated containment can contract authority within the monitoring interval.
\end{assumption}

\begin{designhypothesis}[Operational containment hypothesis]\label{hyp:comp}
Under complete mediation, measurable collective constraints, and bounded monitoring-and-response latency, \textup{\aria{}} should reduce the rate, duration, and maximum overshoot of population-level violations relative to agent-local governance alone, at a measurable cost in false containment, operational delay, and business utility.
\end{designhypothesis}
\paragraph{Rationale.}
A population-level architecture cannot make collective compliance follow mechanically from individual compliance. It can, however, change the control problem: collective statistics become observable, authority becomes revocable, and containment actions can be linked to drift and boundary-contact signals. The validation question is therefore not whether a mathematical lower bound follows from $\min_i\kappa_i$, but whether measured violation rate, violation duration, maximum overshoot, containment latency, mediated-action coverage, false-containment rate, and business-utility loss improve in real workflows.

\begin{remarkx}
The design hypothesis is intentionally falsifiable. If collective constraints are not measurable, if important actions bypass policy enforcement points, if monitors are delayed, or if containment creates unacceptable business or customer harm, the architecture should be narrowed or rejected for that deployment. These quantities are the measurable objects of the validation program in Section~\ref{sec:validation}.
\end{remarkx}

\paragraph{Incentive compatibility.} The authorization dynamics in Section~\ref{sec:bda} align agent optimization with constitutional compliance. Compliant operation can expand the effective decision space through higher autonomy and higher task throughput; violation contracts it. Compliance is therefore partly converted into mechanism design consistent with cooperative-equilibrium constructions \cite{liang2025macspgg}. The important difference is that the incentive structure is enforced by a runtime institution rather than fixed at training time, so it can be updated when law, policy, or evidence changes.

\section{The ARIA Architecture}\label{sec:architecture}

\subsection{Architecture overview}

\aria{} is not a literal execution stack. It is a reference architecture that organizes six governance capabilities across three planes. The \emph{normative and accountability plane} covers policy specification, jurisdictional mapping, ownership, amendment governance, and human oversight competence. The \emph{execution-control plane} covers identity, capability registries, bounded authority, policy decision points, policy enforcement points, escalation, and containment. The \emph{assurance and learning plane} covers population telemetry, dependency mapping, observed-versus-expected behavior monitoring, outcome monitors, drift detection, red teaming, incidents, policy updates, and independent validation. Figure~\ref{fig:arch} shows the planes as interacting controls, not a vertical stack.

\begin{figure*}[t]
\centering
\begin{tikzpicture}[
  plane/.style={draw=black!65, rounded corners=2pt, minimum width=3.55cm, text width=3.28cm, minimum height=0.78cm, align=center, font=\small\bfseries},
  capbox/.style={draw=black!45, rounded corners=2pt, minimum width=3.55cm, text width=3.28cm, minimum height=0.72cm, align=center, font=\scriptsize},
  tag/.style={font=\tiny\itshape, text=black!70, fill=white, inner sep=1pt, align=center},
  arr/.style={-{Stealth[length=2mm]}, black!58, rounded corners=4pt},
  bus/.style={-{Stealth[length=2mm]}, black!45, rounded corners=7pt}
]
\node[plane, fill=blue!8] (p1) at (0,0) {Normative \& Accountability};
\node[plane, fill=green!8] (p2) at (5.25,0) {Execution Control};
\node[plane, fill=orange!10] (p3) at (10.5,0) {Assurance \& Learning};

\node[capbox, fill=blue!4, below=0.30cm of p1] (c1) {\textbf{C1 Policy specification}\\ constitutional core, jurisdictional mappings, exceptions};
\node[capbox, fill=blue!4, below=0.58cm of c1] (c6) {\textbf{C6 Human oversight competence}\\ competence baselines, calibration, recertification};

\node[capbox, fill=green!4, below=0.30cm of p2] (c3) {\textbf{C3 Bounded authority}\\ identity, authority profile, approval, rollback};
\node[capbox, fill=green!4, below=0.58cm of c3] (c4) {\textbf{C4 Runtime enforcement}\\ policy gates, escalation, containment, audit log};

\node[capbox, fill=orange!5, below=0.30cm of p3] (c2) {\textbf{C2 Population assurance}\\ telemetry, dependency graph, $M_2$, outcome monitors};
\node[capbox, fill=orange!5, below=0.58cm of c2] (c5) {\textbf{C5 Adaptive policy learning}\\ red-team, incidents, regression tests, rollback};

\draw[arr] (c1.east) -- (c3.west) node[midway,above,tag]{rules};
\draw[arr] (c3.south) -- (c4.north) node[midway,right=2pt,tag]{authority};
\draw[arr] (c2.south) -- (c5.north) node[midway,right=2pt,tag]{signals};
\draw[arr] (c6.north) -- (c1.south) node[midway,left=2pt,tag]{oversight};

\draw[{Stealth[length=2mm]}-{Stealth[length=2mm]}, black!45, rounded corners=8pt]
  (c4.east) -- ++(0.56,0) |- (c2.west) node[pos=0.30,right,tag]{telemetry\\triggers};
\coordinate (c5split) at ($(c5.south)+(0,-0.66)$);
\draw[black!45, rounded corners=7pt] (c5.south) -- (c5split);
\draw[bus] (c5split) -| (c6.south) node[pos=0.20,below,tag]{learning};
\draw[bus] (c5split) -- ++(0,-0.42) -| ($(c1.west)+(-0.76,0)$) -- (c1.west) node[pos=0.18,below,tag]{amendments};
\end{tikzpicture}
\caption{The \aria{} governance capability map. The architecture is organized as three interacting planes and six capabilities, not as a sequential processing stack. Policy rules authorize execution; execution emits telemetry; assurance signals trigger containment and governed policy updates; human competence and accountability remain cross-cutting constraints.}
\label{fig:arch}
\end{figure*}

\subsection{C1: Policy specification and jurisdictional mapping}
Here \emph{constitution} means a versioned hierarchy of principles, policies, prohibitions, thresholds, exceptions, escalation rules, owners, enforcement actions, evidence requirements, and amendment procedures, not merely a natural-language prompt. A single organization should not treat such a specification as a private engineering artifact in regulated, multi-jurisdictional settings. Federated Constitutional AI (F-CAI) is therefore decomposed into three operational functions. First, a \emph{policy ontology} defines jurisdiction-invariant control objectives such as non-discrimination, adverse-action explainability, anti-deception constraints, and systemic-risk escalation thresholds. Second, \emph{jurisdictional control mappings} translate that ontology into the operative language of the EU AI Act, ECOA/Reg~B, FCA/PRA, MAS FEAT, and analogous requirements. Third, a \emph{constitutional consistency analysis and escalation process} checks restricted fragments for conflicts and routes exceptions to accountable human deliberation. Structured input or representation from affected stakeholders may be required depending on the use case and governance model \cite{gabriel2020values,huang2024ccai}. Amendments are versioned, pre-registered, and auditable.

\subsection[C2: Population assurance and the M2 indicator]{C2: Population assurance and the $M_2$ indicator}
This capability supports population-level observability. Beer's Viable System Model (VSM) motivates the organizational requirement: governance must represent topology, roles, feedback loops, and requisite variety across the population \cite{beer1972brain}. The measurable implementation is not derived from organizational theory; it is a family of indicators that compare observed population behavior with expected behavior encoded in a deployment-calibrated reference model. One candidate is the following normalized Jensen--Shannon statistic:
\begin{equation}
M_2(A,t) \;=\; 1 - \frac{D_{\mathrm{JS}}\!\big(P_{\mathrm{observed}}(t)\,\big\|\,P_{\mathrm{expected}}\big)}{\log 2},
\label{eq:m2}
\end{equation}
where $D_{\mathrm{JS}}$ is the Jensen--Shannon divergence with natural logarithms, so $D_{\mathrm{JS}}/\log 2\in[0,1]$. The indicator compares the empirical distribution of sampled collective decisions, $P_{\mathrm{observed}}(t)$, with the expected distribution $P_{\mathrm{expected}}$ specified by the institution's deployment-calibrated reference model. In Eq.~\eqref{eq:m2}, $P_{\mathrm{expected}}$ is held fixed over the evaluation window; if the reference is updated online, the pre-update model should remain available for audit. Empirical distributions are estimated over a shared finite support with smoothing; categories, windows, stratification variables, reference calibration, and alarm thresholds are domain-specific design choices. $M_2$ is therefore not metacognition in a strong cognitive sense. It is an observed-versus-expected distributional concordance statistic that may reveal behavioral drift when the monitored support captures the relevant behavior, and may miss harms that preserve the chosen marginals.

\subsection{C3: Bounded Decision Authority (BDA)}\label{sec:bda}
BDA separates \emph{capability} (actions an agent can perform) from \emph{authority} (actions it is authorized to perform in context). With $D$ the decision space, the Constitutional Filter $C_F$ maps $D$ to its permissible subspace $D'$ and the Authority Filter $A_F$ maps $D'$ to the currently authorized subspace given role, authority profile, and contextual risk; the Effective Decision Space $\eds = A_F(C_F(D))$ is executable without escalation. Actions in $D'\setminus\eds$ require escalation; actions in $D\setminus D'$ are prohibited regardless of authorization. This instantiates levels-of-autonomy governance \cite{feng2025levels} as an enforceable institution rather than a design guideline and makes Article~14 oversight operationally sustainable at scale.

The five-level taxonomy in Table~\ref{tab:autonomy} is an executive simplification, not a complete ordinal risk scale. Operationally, authority should be represented as an authority profile: execution autonomy, decision scope, materiality, reversibility, oversight latency, external exposure, data sensitivity, and tool criticality. Evidence of compliant operation may support a formally approved authority upgrade, but no agent earns authority automatically. Upgrades require documented human approval, independent validation where material, adversarial or stress evidence, stable domain assumptions, rollback controls, time limits, and recertification. Violations, population-level drift, or boundary pressure can trigger immediate downgrade or suspension.

\begin{table}[t]
\centering\small
\caption{The \aria{} five-level autonomy taxonomy as an executive view of a richer authority profile. Upgrades require formal approval and independent evidence where material; violations or population-level drift trigger mandatory review, downgrade, or suspension.}
\label{tab:autonomy}
\begin{tabularx}{\textwidth}{clX}
\toprule
Level & Name & Operating condition \\
\midrule
L0 & Supervised & All outputs require human review before execution; novel or high-stakes decisions without constitutional precedent. \\
L1 & Assisted & Agent proposes with full reasoning trace; human approves within defined windows; above materiality thresholds. \\
L2 & Monitored & Autonomous execution within tight boundaries; real-time monitoring; automatic escalation on boundary contact. \\
L3 & Delegated & Autonomous within broad boundaries; periodic human audit; stable domains with extensive compliance history. \\
L4 & Autonomous & Post-hoc audit and exception-based review; fully characterized, low-catastrophic-risk domains only. \\
\bottomrule
\end{tabularx}
\end{table}

\subsection{C4: Runtime enforcement and containment}
Training-time assurance does not survive deployment incentives \cite{greenblatt2024faking,naik2025agentmisalignment}; runtime governance supplies continuous verification with intervention capability. Rather than reducing heterogeneous risk to a single compensatory score, financial implementations should maintain a multidimensional \emph{agent risk profile}: autonomy, adaptability, continuity, materiality, tool criticality, regulatory sensitivity, external exposure, security risk, reversibility, and protected-attribute sensitivity. A dashboard may optionally summarize a non-compensatory subset of that profile, but materiality and hard-gate dimensions must remain explicit: payments access, regulated customer impact, security exposure, and irreversible action rights should not be offset by low values elsewhere. Drift is detected through population-level monitors: behavioral-signature monitoring against baseline decision distributions, boundary monitoring that detects repeated approach to the $\eds$ boundary before violation, cross-agent consistency monitoring that flags divergence in how similar contexts are governed, and dependency-graph monitoring over delegation, communication, shared state, shared models, shared data, shared tools, and common feedback signals. Signals feed policy decision points and policy enforcement points (PDPs/PEPs) at the orchestrator, tool gateway, API gateway, transaction layer, model wrapper, event bus, or human-approval service. Graduated containment moves from monitoring escalation to authority downgrade, sandboxing, and controlled shutdown. Each step is logged, reversible where appropriate, and designed to support post-deployment monitoring, supervisory evidence, trajectory-level model-risk evidence \cite{szpruch2026scalable}, and Article~72 post-market monitoring where the institution acts as provider.

\subsection{C5: Adaptive policy learning}
Static policy specifications are brittle under adversarial pressure, regulatory change, and distribution shift. This capability treats red-team findings, regulatory feedback, operational failures, and near misses as inputs to governed policy change. Red-team agent populations probe boundaries and feed failures into pre-registered amendment hypotheses; incidents are taxonomized; policy variants are evaluated offline, in sandbox, by replay, or on synthetic and controlled historical data; and updates are accepted only after evidence, owner approval, regression checks, and rollback planning. Antifragility remains a research hypothesis for this capability, not a demonstrated property of the architecture. The capability is a change-management and stress-testing process for value specifications, not an analogue of capital adequacy.

\subsection{C6: Human oversight competence}
Every human-in-the-loop obligation presupposes humans who can exercise meaningful oversight. Automation erodes exactly that capability, a problem long described as the ironies of automation and as automation misuse, disuse, and abuse \cite{bainbridge1983ironies,parasuraman1997humans}. A credit analyst who has not exercised judgment for years cannot meaningfully review a credit agent. The required controls are deliberately narrow: an oversight-competence baseline before automation, periodic calibration exercises on realistic cases, and minimum practice plus recertification for accountable oversight personnel. Broader Protocol for Human Preservation in AI-Optimized Organizations (PHP-AIO) concerns about tacit-knowledge preservation and socio-institutional capital motivate the control objective \cite{rodriguez2026automateformalprotocolhuman}, but the minimum architectural requirement is evidence that human oversight remains competent rather than ceremonial.

\subsection{Mapping failure modes to controls}

After introducing the six capabilities, Table~\ref{tab:taxmap} maps the non-compositionality taxonomy to observable signals and the primary \aria{} controls most directly implicated by each failure mode.

\begin{table}[t]
\centering
\begingroup
\scriptsize
\setlength{\tabcolsep}{3pt}
\renewcommand{\arraystretch}{1.08}
\caption{How the taxonomy maps to observable signals and primary \aria{} controls. The mapping is illustrative: deployments may require multiple controls and domain-specific evidence.}
\label{tab:taxmap}
\begin{tabularx}{\textwidth}{>{\raggedright\arraybackslash}p{0.23\textwidth}>{\raggedright\arraybackslash}p{0.23\textwidth}>{\raggedright\arraybackslash}p{0.24\textwidth}>{\raggedright\arraybackslash}X}
\toprule
Type & Example & Observable signal & Primary capability \\
\midrule
Outcome & Approval disparity & Population gap or tail outcome & C2/C4 \\
Interaction-induced & Collusion or strategic gaming & Trajectories, prices, coordinated timing & C2/C4/C5 \\
Shared-environment & Common-signal concentration & Correlated dependency or exposure shift & C2 \\
Cross-agent information & Shared-state leakage & Graph, memory, or state telemetry & C2/C4 \\
Control-plane & Delegation chain bypass & Permission path or escalation trace & C3/C4 \\
Accountability & Outcome without owner & Missing lineage, responsibility assignment, or audit trail & C1/C6 \\
\bottomrule
\end{tabularx}
\endgroup
\end{table}

\section{Operationalizing ARIA in Financial Institutions}\label{sec:operationalizing}

\subsection{Regulatory evidence and supervisory questions}\label{sec:regmap}

Table~\ref{tab:regmap} gives the mapping in a practical form: the supervisory question each capability helps answer, the evidence it should leave behind, and the accountability pattern around it. The mapping turns the architecture into a candidate control set for pilot implementation, assurance testing, and supervisory assessment. The claims are intentionally prudential: the controls may help operationalize evidence needs under existing regimes, but the table is not a legal interpretation. The ownership model follows the three lines of defence and a responsible-accountable-consulted-informed (RACI) pattern: the first line owns agents, authority, operation, and customer outcomes; the second line sets risk policy, challenges metrics, and oversees containment; the third line independently audits control design and effectiveness; board and senior management retain accountability.

\begin{table*}[t]
\centering
\begingroup
\hfuzz=5pt
\tiny
\setlength{\tabcolsep}{2pt}
\renewcommand{\arraystretch}{1.08}
\caption{Simplified regulatory mapping for \aria{}. Each row states the supervisory question, compact regulatory anchors, the evidence produced, and a responsible-accountable-consulted-informed (RACI) accountability pattern. Article numbers refer to the EU AI Act \cite{aiact}; U.S. model-risk references to SR~26-2 and historical SR~11-7 \cite{sr262,sr117}; UK and EU anchors to FCA Consumer Duty, PRA SS1/23, and DORA \cite{fca2022consumerduty,pra2023ss123,dora2022}; and agent-governance anchors to MAS FEAT, IMDA, and NIST \cite{feat,imda2026,nist2026}.}
\label{tab:regmap}
\begin{tabularx}{0.94\textwidth}{@{}>{\raggedright\arraybackslash}p{0.46cm}>{\hsize=1.30\hsize\raggedright\arraybackslash}X>{\hsize=0.90\hsize\raggedright\arraybackslash}X>{\hsize=0.80\hsize\raggedright\arraybackslash}X@{}}
\toprule
C & Supervisory question and anchors & Evidence produced & RACI pattern \\
\midrule
C1 & \textbf{Policy specification.} Are values, fair-lending rules, and jurisdictional mappings specified consistently? AI Act Arts.\,9/17; ECOA/Reg\,B; FEAT. & Policy ontology; jurisdictional mappings; exception and amendment trail. & A: board and senior mgmt.; R: responsible AI, compliance, legal, business risk; internal audit: third line \\
C2 & \textbf{Population assurance.} Does observed population behavior remain concordant with expected behavior documented in the deployment reference model? AI Act Art.\,26 and, for provider roles, Art.\,72; SR\,26-2; FCA Consumer Duty; PRA SS1/23. & $M_2$ observed-versus-expected time series; calibrated thresholds; dependency graph; deployment reference model. & R: first-line telemetry; C: model risk and independent risk; A: chief risk officer or risk committee; internal audit: third line \\
C3 & \textbf{BDA.} Are agents acting within authorized profiles and escalation limits? AI Act Arts.\,14/26; IMDA bounded autonomy; NIST authorization levels. & Authority profile; escalation logs; approved upgrade and downgrade history. & A/R: business; C: risk, compliance, legal; internal audit: third line \\
C4 & \textbf{Runtime enforcement.} Can the institution detect drift and intervene before harm materializes? AI Act Art.\,26 and, for provider roles, Art.\,72; SR\,26-2; FEAT; IMDA/NIST runtime monitoring. & PDP/PEP logs; drift logs; containment register; reversal records. & R: platform/control; A: business and risk control; C: compliance and security; internal audit: third line \\
C5 & \textbf{Adaptive policy learning.} Are incidents and stress tests converted into governed, regression-tested policy updates? AI Act Art.\,9; supervisory stress testing; DORA digital-operational-resilience logic. & Red-team register; amendment hypotheses; sandbox/replay results; rollback records. & R: first-line control or policy owner; A: risk committee; C: model risk, business, legal, compliance; internal audit: audit only \\
C6 & \textbf{Human oversight competence.} Are humans still competent to exercise meaningful oversight? AI Act Arts.\,14/4; senior manager accountability regimes. & Competence baselines; calibration exercises; recertification records. & A/R: senior managers and human-resources learning; C: risk and compliance; internal audit: third line \\
\bottomrule
\end{tabularx}
\endgroup
\end{table*}

\subsection{Minimum Viable ARIA}\label{sec:mva}

The six-capability architecture does not have to arrive as a single enterprise transformation. A minimum viable implementation is an auditable control bundle for any financial institution deploying agent populations in high-impact workflows. It has nine elements:
\begin{enumerate}
\item An agent registry recording identity, version, owner, purpose, model lineage, and deployment context.
\item An authority matrix specifying permitted actions by agent, workflow, tool, data domain, and jurisdiction.
\item An interaction and dependency graph recording delegation, communication, shared-state, shared-model, shared-data, shared-tool, and common-feedback dependencies.
\item A constitutional core translated into workflow-level operating rules, including escalation thresholds for collective outcomes.
\item Population-level monitors for fairness, conduct, and other outcome distributions, not only per-agent constraint checks.
\item Runtime telemetry over tool calls, delegated decisions, boundary contacts, and cross-agent state sharing.
\item A graduated containment policy linking monitor signals to monitoring escalation, autonomy downgrade, sandboxing, or shutdown.
\item An incident-to-policy-update workflow that turns violations, near misses, and red-team findings into versioned constitutional amendments.
\item Periodic human-calibration exercises that preserve the expertise required for meaningful oversight.
\end{enumerate}
This bundle does not implement every feature of \aria{}, but it makes identity, authority, dependency topology, collective outcomes, intervention rights, and human competence examinable from the first deployment. Institutions can then add richer deliberation, richer $M_2$ reporting, and adaptive policy-learning evidence as maturity accumulates.

\subsection{Accountability and auditability}\label{sec:accountability}

A governance architecture is only as good as the accountability it produces. Two properties matter for fairness and accountability: who answers for what, and whether outsiders can verify it.

\textbf{Responsibility assignment.} Non-compositionality creates responsibility gaps: collective violations for which no individual agent, and often no individual human, is answerable \cite{santoni2021gaps}. \aria{} narrows that gap not by verifying gap-freeness, which is intractable for the reasons given in Section~\ref{sec:noncomp}, but by constructing answerable roles for each capability (Table~\ref{tab:regmap}). The first line owns agents, authority decisions, workflow operation, and customer outcomes; the second line sets risk policy, challenges metrics, and oversees containment; the third line independently audits control design and effectiveness; and board and senior management retain accountability for risk appetite, policy approval, and oversight competence. Internal audit is not an operating owner of adaptive policy change. Each collective outcome should trace to a capability, and each capability to a RACI pattern and named human role, following the logic of senior-manager accountability regimes. The architecture is therefore an accountability allocation, not only a control system.

\textbf{Auditability.} Every capability emits a designated artifact (Table~\ref{tab:regmap}): versioned policy specifications with deliberation records; $M_2$ time series; authorization matrices and escalation logs; containment registers; red-team and amendment registers; and expertise-maintenance records. These artifacts support internal audit testing, supervisory assessment, and aggregate or redacted external accountability. The design goal is simple: the question ``was the institution governing its agent population?'' should become an evidence question, not a trust question.

\textbf{Power and contestation.} Federated policy formation redistributes specification power from a single engineering function to a structured multi-stakeholder process. The policy-specification process creates channels for affected-community contestation and supervisory scrutiny, while accountable institutional owners retain amendment authority and pre-registration prevents silent revision. This does not settle the deeper political question of whose values should govern. It makes the allocation explicit and contestable, which is the precondition for asking that question in a serious way.

\section{Conceptual Demonstrations}\label{sec:experiments}

\subsection{Demonstration design and reconstruction parameters}

The demonstrations are deliberately narrow. They are not benchmarks, production effect-size estimates, or evidence that \aria{} is validated. They illustrate two mechanisms that a future field program must test: (E1)~local equal-treatment checks can coexist with aggregate disparity under shared-signal and measurement-quality coupling, and population-level containment can reduce that disparity in the stylized setting; (E2)~in a deliberately constructed two-stage drift regime, an observed-versus-expected $M_2$ monitor alarms earlier than a prohibited-outcome monitor. Table~\ref{tab:simparams} reports reconstruction-level parameters, seeds, monitoring rules, and evaluation metrics; Section~\ref{sec:limitations} states the limits.

\begin{table}[t]
\centering
\begingroup
\scriptsize
\setlength{\tabcolsep}{3pt}
\renewcommand{\arraystretch}{1.08}
\caption{Methodological specification of the two simulations. The table gives reconstruction-level parameters, not production calibration values.}
\label{tab:simparams}
\begin{tabularx}{\textwidth}{>{\raggedright\arraybackslash}p{0.18\textwidth}>{\raggedright\arraybackslash}p{0.18\textwidth}>{\raggedright\arraybackslash}p{0.38\textwidth}>{\raggedright\arraybackslash}X}
\toprule
Experiment & Scale & Key settings & Metric \\
\midrule
Shared-signal thin-file exclusion & $N=200$, $T=400$, 20 seeds & Equal true creditworthiness ($\mu=0.5$); estimation noise $\sigma_0=0.30$, $\sigma_1=0.65$; collective bound $\delta=0.05$; variance aversion initialized as $|\mathcal{N}(0.3,0.05)|$ and updated by clipped $\mathcal{N}(0.012,0.004)$; containment downgrades the top decile when gap $>\delta$ and applies corrective feedback proportional to overshoot. & Approval-rate gap, individual local-check violations, 10--90\% band. \\
Two-stage behavior drift & $T=300$, $K=8$, 200 episodes & Category~0 prohibited; clean and drift episodes alternate; 400 sampled decisions per step; expected-behavior reference calibrated from a Dirichlet baseline with prohibited mass $0.01$; drift starts uniformly in steps 60--139; phase~1 permutes benign mass for 70 steps; phase~2 targets prohibited mass $0.35$; violation threshold $0.10$; detector thresholds matched at false-positive rate $\leq 5\%$. & Lead time before observable violation, receiver-operating-characteristic area. \\
\bottomrule
\end{tabularx}
\endgroup
\end{table}

\subsection{Shared-signal thin-file exclusion and population-level containment}\label{sec:exp1}
\textbf{Setup.} $N=200$ credit components evaluate applicants from two groups with identical true creditworthiness. Group $G_1$ is thin-file, with estimation noise $\sigma_1=0.65$ versus $\sigma_0=0.30$, matching the structure of historically excluded populations with sparse credit records. Each component satisfies a local equal-treatment check conditional on its own risk estimate ($\kappa_i=1$ throughout). Components adapt a variance-aversion parameter through a shared portfolio-loss signal; this is a shared-environment coupling mechanism, not evidence of strategic coordination. The collective principle bounds population approval-rate disparity at $\delta=0.05$, a simulation parameter rather than a legal threshold. We compare local controls alone with \aria{}-style population monitoring and graduated containment, using 20 seeds and $T=400$.

\textbf{Results} (Figure~\ref{fig:noncomp}). Without population-level governance, collective disparity grows from $0.05$ to $0.41$ in this stylized setting, or $8\times$ the chosen bound, with zero local-check violations. This does not show that all procedurally consistent components become collectively discriminatory. It shows that a local equal-treatment check conditional on imperfect, group-dependent measurements can be incompatible with an aggregate disparity constraint. Under population-level containment, the same population under the same pressure is held at $0.052$, within $4\%$ of the bound, through authority downgrades and corrective feedback. The containment rule is intentionally idealized and has direct access to the monitored gap; implementation delays, noisy group labels, and utility trade-offs are deferred to field validation. The demonstration motivates the control problem; it does not estimate real credit effects or the business cost of containment.

\begin{figure}[t]
\centering
\includegraphics[width=0.8\columnwidth]{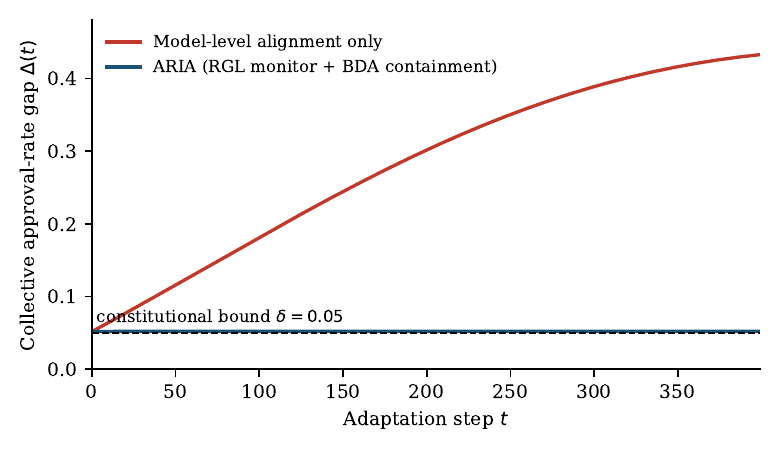}
\caption{Shared-signal thin-file exclusion in a stylized credit population (mean, 10--90\% band, 20 seeds). Local controls alone: collective disparity reaches $8\times$ the chosen bound $\delta=0.05$ with zero local-check violations. Population-level containment holds disparity close to the bound in the stylized setting.}
\label{fig:noncomp}
\end{figure}

\subsection[M2 as an observed-versus-expected behavior indicator]{$M_2$ as an observed-versus-expected behavior indicator}\label{sec:exp2}
\textbf{Setup.} Episodes last $T=300$ steps over $K=8$ action categories, with category~0 prohibited. Drift is deliberately constructed in two phases: first probability mass reorganizes among benign categories, and only later does the prohibited category cross the violation threshold. We compare two detectors with the same sample budget, 400 decisions per step, and matched false-positive rate ($\leq 5\%$ on clean episodes): an outcome monitor watching prohibited-action frequency and the $M_2$ observed-versus-expected behavior indicator of Eq.~\eqref{eq:m2}. We run 200 evaluation episodes, alternating clean and drifted cases.

\textbf{Results} (Figure~\ref{fig:m2}). Both detectors eventually discriminate perfectly in this constructed regime, so the result should not be read as a general superiority claim. The difference is timing under this data generator: the observed-versus-expected distributional monitor alarms during benign-category reorganization, whereas the prohibited-outcome monitor cannot alarm until mass reaches the prohibited category. The median lead times are 65 and 12 steps, respectively, at matched false-positive rate. The demonstration therefore supports a narrower claim: when drift first appears in monitored distributional structure before an observable violation, an observed-versus-expected statistic can provide earlier warning than a univariate prohibited-outcome monitor.

\begin{figure}[t]
\centering
\includegraphics[width=\columnwidth]{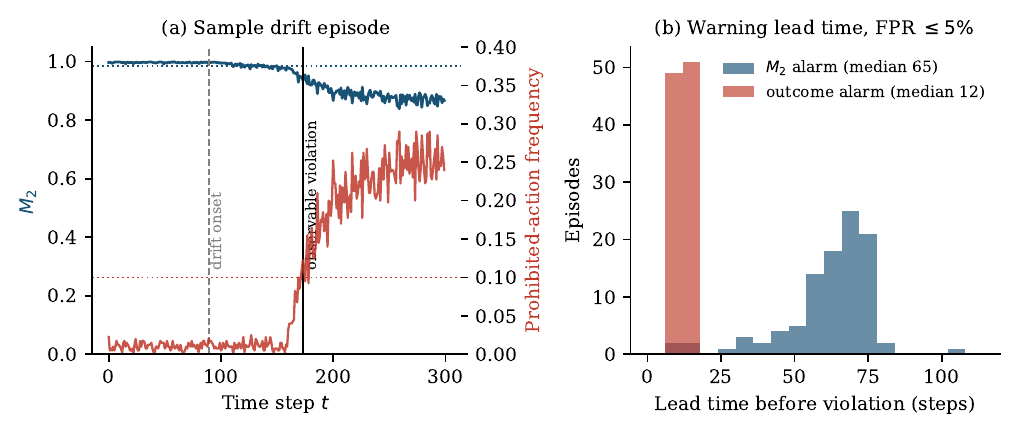}
\caption{Observed-versus-expected drift warning. (a)~A constructed two-stage drift episode: $M_2$ degrades as observed behavior diverges from the expected distribution during benign reorganization; prohibited-action frequency rises only later. (b)~Lead time before observable violation at matched false-positive rate $\leq 5\%$: median 65 steps ($M_2$) vs.\ 12 (outcome monitoring), across the drifted evaluation episodes.}
\label{fig:m2}
\end{figure}

\section{Field Validation and Standardization Path}\label{sec:validation}

\subsection{Field validation methodology}

Validating \aria{} beyond simulation requires an environment in which organizational alignment can actually fail. Global financial institutions operating across multiple jurisdictions are a natural setting. They combine \emph{regulatory complexity}, including simultaneous EU AI Act, SR~26-2, FCA Consumer Duty, PRA SS1/23, MAS, DORA, and analogous requirements across multiple jurisdictions, which is the conflict F-CAI addresses and which the Financial Stability Board identifies as an active governance challenge \cite{fsb2024}; \emph{high-precision measurability} of credit, fraud, and compliance outcomes; \emph{heterogeneous agent populations} spanning model families, training histories, and mandates; and \emph{continuous adversarial pressure}, which supplies the stress conditions that adaptive policy learning requires.

\subsection{Five-stage validation program}

We propose a five-stage program, each stage a pre-registered empirical question mapped to a capability, and each falsifiable:

\textbf{Stage 1 (policy specification).} Do structured participatory policy specifications achieve higher measured adherence across heterogeneous stakeholder populations than expert-authored ones? Comparative behavioral testing on identical agent populations; preference elicitation from supervisory, employee, and affected-customer stakeholder groups.

\textbf{Stage 2 ($M_2$).} Does $M_2$ detect policy or behavioral drift by comparing observed population behavior with expected behavior before observable violations in production populations, and does it outperform simpler baselines such as multivariate cumulative-sum tests, Page--Hinkley tests, population-stability indices, classifier-based two-sample tests, embedding drift, transition-matrix drift, and multivariate outcome monitors? Longitudinal $M_2$ tracking against recorded violation events, with receiver-operating-characteristic analysis, lead-time, ablation, and calibration analysis, is the stage that would establish or refute $M_2$ as a useful reportable indicator.

\textbf{Stage 3 (BDA).} Does bounded decision authority reduce violation frequency and severity at acceptable cost? Matched-domain A/B comparison and fault-injection testing should measure violation rate, violation duration, maximum overshoot, containment latency, mediated-action coverage, false-containment rate, business utility loss, customer delay, and operational workload.

\textbf{Stage 4 (adaptive policy learning).} Do governed incident-to-policy amendments improve future violation metrics without unacceptable regression, complexity growth, or overfitting to the last incident? Longitudinal tracking should compare static, reactive, and adaptive amendment processes against adaptive red-team exercises and should measure improvement, regression on prior scenarios, policy complexity, owner approval latency, rollback frequency, and implementation cost.

\textbf{Stage 5 (human oversight competence).} Does mandated expertise maintenance improve measured human oversight quality at 6-, 12- and 24-month horizons against control organizations?

The program is designed as a collaborative agenda with pre-registration of both hypotheses and policy changes. We note the obvious conflict-of-interest hazard: an institution evaluating its own governance architecture has an incentive to find that it works. Pre-registration, external academic partners with publication rights independent of results, and regulatory access to the raw artifacts are the mitigations we consider minimal.

\subsection{Toward a supervisory reference architecture}\label{sec:standard}

SR~11-7 did not invent model validation. It supplied a historical shared vocabulary and control taxonomy that made model risk governable, assessable, and comparable across institutions. SR~26-2 now provides the revised U.S. supervisory reference point, superseding and replacing SR~11-7 while preserving the need for risk-based governance, validation, monitoring, and controls \cite{sr117,sr262}. Multi-agent AI in regulated industries may be at a similar early stage. Supervisory evidence needs increasingly concern collective behavior (Section~\ref{sec:background}), while coupled agentic workflows still lack a shared governance architecture. \aria{} is positioned as a candidate research architecture for that role, not as a mature supervisory standard. The claim is deliberately falsifiable: the architecture specifies candidate statistics, named artifacts, and named accountable roles, so adoption, assessment, and cross-institution comparison can become concrete research questions.

Three properties would make a governance framework standardizable if validated. \emph{Regulatory anchoring}: each control is connected to evidence needs institutions already face (Table~\ref{tab:regmap}). \emph{Capability separability}: an institution could pilot elements incrementally, for example runtime monitoring first, bounded authority second, and federated constitutional formation as maturity grows. \emph{Supervisability}: the artifacts support assurance and the metrics support trend analysis. The validation program that could substantiate the architecture consists of staged field studies in production-like or production agentic workflows with pre-registered hypotheses, negative-results publication rights, and independent scrutiny. Executing that program is the agenda this paper opens.

\section{Related Work}\label{sec:related}

\subsection{Multi-agent alignment and collusion}
 Cooperative-equilibrium constructions \cite{liang2025macspgg} show that collective alignment can be achieved under designed incentives. \aria{} supplies the institutional machinery that makes such incentives durable, updatable, and legitimate under regulatory change. Empirical collusion results \cite{fish2024collusion,motwani2024secret} ground Proposition~\ref{prop:noncomp} in observed behavior rather than theoretical possibility alone. \aria{} differs by targeting regulated-finance outcomes, fair-lending non-compositionality, accountable human functions, and audit artifacts rather than collusion control alone.

\subsection{Agentic finance and financial supervision}
 Agentic-finance and financial-services runtime-governance work describes a shift from isolated prediction models toward autonomous systems, trajectory telemetry, continuous authorization, temporal conformance checking, and tiered containment \cite{aldridge2026agenticfinance,szpruch2026scalable}. \aria{} absorbs this trajectory-level insight but emphasizes population-level fairness constraints, observed-versus-expected behavior monitoring ($M_2$), constitutional amendment, and human-competence preservation as finance-specific governance capabilities.

\subsection{Fairness in coupled and multi-agent systems}
 Recent work on financial multi-agent bias and fairness in agentic AI treats fairness as a dynamic, interactional, and system-level property \cite{madigan2026emergentbias,ranjan2025fairnessagentic}. The thin-file disparity demonstration in Section~\ref{sec:exp1} should be read against that literature and against classic concerns about measurement quality, group-dependent uncertainty, selective labels, proxy discrimination, and calibration-fairness trade-offs, rather than as a claim that all fairness failures are uniquely multi-agent.

\subsection{Organizational cybernetics and systems safety}
 The population-assurance capability builds on Beer's VSM \cite{beer1972brain}. We extend that line to agent-population governance and use $M_2$ as a candidate continuous observed-versus-expected behavior statistic rather than as a complete model of organizational self-knowledge. ARIA also inherits from older safety ideas: runtime assurance and Simplex-style separation \cite{sha1998simplex}, system-theoretic accident modeling \cite{leveson2011engineering}, defense in depth, common-mode failure analysis, and high-reliability organizing. Its novelty is not that monitoring and containment are new, but that they are organized for adaptive agentic workflows under financial accountability.

\subsection{Democratic and value-sensitive alignment}
 \citet{gabriel2020values} supplies the normative foundation, and Collective Constitutional AI \cite{huang2024ccai} provides the main empirical anchor for participatory formation. \aria{} extends that logic from model values to continuous organizational policy formation.

\subsection{Runtime governance, identity and agent safety}
 MI9 \cite{mi92025} provides a published runtime-governance architecture and an Agency-Risk Index that motivates non-compensatory summaries of agent risk profiles. Policies-on-paths formalize compliance as a function of agent identity, partial execution path, proposed action, and organizational state \cite{kaptein2026policiespaths}. Agent identity infrastructure makes actions governable through delegated authorization and accountability for nondeterministic agents \cite{singla2026aip}. Agent misalignment benchmarking \cite{naik2025agentmisalignment} and alignment faking \cite{greenblatt2024faking} establish the training-deployment gap runtime enforcement exists to close. Responsibility-gap complexity \cite{jiang2025responsibility,santoni2021gaps} motivates governance that does not depend on exhaustive verification.

\subsection{Constitutional AI and its extensions}
 CAI \cite{bai2022constitutional} makes model values explicit and auditable. \aria{} is not an alternative to this line of work but the organizational layer above it: it presupposes individually governed agents and addresses what they do collectively.

\subsection{Adaptive governance and antifragility}
 \citet{delachica2026stresssignaldetectingantifragilitycompatible} study stress as a detectable signal for antifragility-compatible regimes in multi-agent LLM systems, while \citet{taleb2012antifragile} supplies the design philosophy behind the adaptive-policy research agenda. In this manuscript, however, antifragility is treated as a hypothesis for future validation, not as an empirical result.

\section{Limitations}\label{sec:limitations}

Our claims are bounded in five ways. \emph{The formal result is only a counterexample}: Proposition~\ref{prop:noncomp} shows that local checks do not logically imply collective compliance, not that every agent population will fail or that strategic equilibria are always the mechanism. \emph{The design hypothesis is operational}: it requires mediated actions, monitorable collective constraints, bounded response latency, and acceptable trade-offs in false containment, delay, and business utility. Principles that resist statistical operationalization, such as dignity or non-manipulation in the thick sense, fall outside it, and characterizing that boundary remains open. \emph{The simulations are mechanism illustrations}: agents are analytic decision rules rather than LLMs; the thin-file disparity result depends on shared-signal and measurement-quality coupling; the drift regime is deliberately two-stage; both detectors in Experiment~2 achieve ceiling discrimination, so only the lead-time contrast is informative. Effect sizes will not transfer to production; mechanism structure is the claim. \emph{$M_2$ is support-dependent}: because it compares observed and expected behavior over a chosen support, it can miss harms that preserve monitored marginals, reflect benign distributional change, or fail under a miscalibrated reference model. \emph{Governance can be theater}: an institution can adopt the artifacts without the substance. Independent assessment of the artifacts, independent validation, and publication of negative results are the backstops.

\section{Conclusion}
Fairness and safety controls attached to individual models need not see, and therefore may fail to prevent, collective failures produced by coupled agentic workflows. We have given a finite counterexample, two transparent mechanism demonstrations, and a six-capability governance architecture mapped to evidence needs across multiple regulatory regimes. The architecture also specifies the accountability allocation and audit artifacts that make such governance assessable. The blind spot is structural; the remedy proposed here is architectural and still requires field validation. Production evidence, cross-institutional replication, operational specification, and regulatory engagement would be required before \aria{} could be considered a reference architecture in any stronger sense. We therefore specify a falsifiable five-stage program (Section~\ref{sec:validation}) rather than a claim of validation. If $M_2$ does not add value over simpler drift baselines in real agent populations, if bounded authority does not reduce violation severity at acceptable cost, or if governed policy updates do not improve future outcomes without regression and complexity burden, the architecture should be revised or narrowed on that evidence. The resulting agenda is a falsifiable validation program, not a declaration of effectiveness.

\section*{Declarations}

\subsection*{Ethical considerations}
This work analyzes governance failures that could, if detailed for a specific deployment, inform evasion of the very controls proposed; we therefore describe detection mechanisms at the architectural level and withhold deployment-specific thresholds. The simulations use synthetic data only; no human subjects or personal data were involved. The thin-file disparity demonstration models a redlining-like exclusion pattern to expose a control failure, not to provide a method; the containment mechanism is presented alongside it. Field validation will require ethics review, regulator engagement, and protections for affected customer populations.

\subsection*{Adverse impacts}
Two adverse-impact channels merit statement. First, a population-level governance architecture, if adopted as compliance theater, could \emph{legitimize} multi-agent deployments without delivering substantive protection. We mitigate that risk by designing every capability around independently assessable artifacts and by naming accountable human roles, but the risk is real and supervision-dependent. Second, the same monitoring machinery that detects collective discrimination could be repurposed for worker or customer surveillance. The architecture's policy-specification capability is the intended constraint, and repurposing risk applies explicitly in deployment and supervision contexts. We judge the counterfactual, ungoverned agent populations in high-stakes finance, to carry substantially greater expected harm.

\subsection*{Transparency statement}
Generative AI tools were used to support literature triage, copy-editing, technical consistency checks, and compilation-log inspection. The authors remain responsible for all claims, references, analysis, and final wording.

\subsection*{Acknowledgments}
This work forms part of a broader research agenda on agent-population governance in regulated finance.

\begingroup
\small
\bibliographystyle{unsrtnat}
\bibliography{references}
\endgroup

\end{document}